\documentclass[onecolumn,amsmath,amssymb,10pt,aps]{revtex4}
  
\pagestyle{plain}\textheight24cm

\usepackage[utf8]{inputenc}
\usepackage[T1]{fontenc}

\usepackage{amsmath}
\usepackage{amsfonts}
\usepackage{graphicx}
\usepackage{yfonts}
\usepackage{color}
\usepackage[normalem]{ulem}
\usepackage{amsthm}
\usepackage{bm}
\usepackage{bbm}
\usepackage{mathtools}
\usepackage{array}
\usepackage{placeins}
\usepackage{enumitem}
\usepackage{tikz}

\def\<{\langle}
\def\>{\rangle}

\newcommand{\Tr}{\mathrm{Tr}}
\def\oper{{\mathchoice{\rm 1\mskip-4mu l}{\rm 1\mskip-4mu l}
{\rm 1\mskip-4.5mu l}{\rm 1\mskip-5mu l}}}
\DeclareMathAlphabet\mathbfcal{OMS}{cmsy}{b}{n}

\mathchardef\mhyphen="2D 

\newtheorem{Definition}{Definition}

\newtheorem{Proposition}{Proposition}
\newtheorem{Example}{Example}

\begin{document}

\title{Indecomposable entanglement witnesses from symmetric measurements}

\author{Katarzyna Siudzi\'{n}ska}
\affiliation{Institute of Physics, Faculty of Physics, Astronomy and Informatics \\  Nicolaus Copernicus University in Toru\'{n}, ul. Grudzi\k{a}dzka 5, 87--100 Toru\'{n}, Poland}

\begin{abstract}
We propose a family of positive maps constructed from a recently introduced class of symmetric measurements. These maps are used to define entanglement witnesses, which include other popular approaches with mutually unbiased bases and mutually unbiased measurements. A particular interest is given to indecomposable witnesses that can be used to detect entanglement of quantum states with positive partial transposition. We present several examples for different number of measurements.
\end{abstract}

\flushbottom

\maketitle

\thispagestyle{empty}

\section{Introduction}

Quantum entanglement is an essential resource that plays a fundamental role in quantum information processing, quantum communication, quantum computing, and other modern quantum technologies \cite{HHHH,Nielsen}. In particular, any bipartite entangled state enhances the teleportation power \cite{Masanes} and displays hidden nonlocality \cite{Masanes2}. The usefulness of quantum tasks usually increases with the amount of entanglement \cite{Takagi,SkrzypczykLinden,BaeDarek}. Characterization of entangled states is essential both in theory and practice. However, the problem of distinguishing between separable and entanged states remains open; in fact, it is NP-hard \cite{Gurvits}.

For qubit-qubit and qubit-qutrit systems, the necessary and sufficient separability condition is given by the celebrated Peres-Horodecki positive partial transposition (PPT) criterion \cite{PhysRevLett.77.1413,HORODECKI1997333}. In higher dimensions, this condition is only necessary, which first was shown for a qutrit-qutrit system \cite{HORODECKI1997333}. More refined detection methods include the computable cross-norm or realignment (CCNR) criterion \cite{Rudolph,HHH_OSID,ChenWu,ChenWu2}, the correlation matrix criterion \cite{deVicente,deVicente2}, the local uncertainty relationship criterion \cite{Takeuchi}, the reduced density matrix criterion \cite{Gingrich}, and the covariance matrix criterion \cite{CMM1}.

Another approach to entanglement detection is through entanglement witnesses, which are Hermitian block-positive (but not positive) operators. Hence, any such operator is positive on separable states, and a state $\rho$ is separable if and only if $\Tr(\rho W)\geq 0$ for every entanglement witness $W$. All entangled states have witnesses that detect them \cite{Terhal1,Terhal2}. In other words, if $\rho$ is entangled, there exist a (non-unique) witness $W$ of its entanglement such that $\Tr(\rho W)<0$. The problem lies in finding a suitable witness for a given state. The advantage of choosing entanglement witnesses over other detection methods is that non-separability of a state is decided upon calculating expectation values of $W$ in that state. Therefore, it requires less information than a full state tomography, which also means less experimental devices and fewer measurements performed.

There exists a special class of witnesses that can detect quantum states with positive partial transposition, also known as bound entangled states \cite{Horodeccy,Horodeccy2,Terhal2,Lewenstein,Lewenstein2}. They are called indecomposable because they cannot be decomposed into $W=A+B^\Gamma$ with positive $A$ and $B$, where $\Gamma$ is a partial transposition. There is no general construction method for such operators, and it is often hard to determine whether a witness is decomposable or not. However, several classes of indecomposable entanglement witnesses have been found, like the ones related to the well-known realignment or computable cross-norm (CCNR) separability criterion \cite{R1,R2,R3} and covariance matrix criterion \cite{CMM1,CMM2,CMM3}, as well as their generalizations \cite{ESIC-Darek,ESIC-Darek2}.

In the construction of entanglement witnesses, one often uses mutually unbiased bases (MUBs). Orthonormal bases in $\mathbb{C}^d$ are mutually unbiased if and only if the transition probability between any two vectors that belong to different bases is constant \cite{Durt}. In ref. \cite{MUBs}, the authors used MUBs to define a new class of witnesses and analyzed their properties in $d=3$. This construction has been generalized in many ways. Li et al. introduced analogical operators for mutually unbiased measurements (MUMs) \cite{Li} and symmetric informationally complete measurements (SIC-POVMs) \cite{EW-SIC}. Wang and Zheng \cite{EW-2MUB} considered the MUB-based witnesses in composite systems with different dimensions. Hiesmayr et al. \cite{Junu2} showed that inequivalent and unextendible sets of MUBs are sometimes more useful to detect entanglement, whereas Bae et al. \cite{HowMany} found that more than $d/2+1$ MUBs are needed to identify bound entangled states. The MUMs encompassing a full range of purity were shown to detect entanglement with as little as two measurement operators \cite{MUM_purity}. Recently, it has also been proven that the witnesses constructed from MUMs belong to the class based on the CCNR criterion \cite{P_maps}.

In this paper, we construct a class of entanglement witnesses based on symmetric measurements introduced in ref. \cite{SIC-MUB}. These measurements include mutually unbiased bases and symmetric informationally complete POVMs as special cases. In any finite dimension $d>2$, one has at least four informationally complete sets of symmetric measurements. Their ability to detect quantum entanglement has already been shown \cite{SIC-MUB}. We recall the definition and basic properties of such operators, together with the construction method. Next, we use a (not necessarily complete) set of measurements to establish a family of positive, trace-preserving maps and the associated entanglement witness. These witnesses are also related to the class based on the CCNR criterion \cite{YuLiu}. Also, they do not depend on the purity of measurement operators but on their number inside each POVM. It turns out that one can construct the same entanglement witness using different measurements. Finally, we provide examples of indecomposable witnesses that belong to our class.

\section{Classes of symmetric measurements}

Quantum measurements are represented by positive, operator-valued measures (POVMs), which are families of positive operators that sum up to the identity.  For a given density operator $\rho$, the probability outcome associated with a POVM element $E_\alpha$ is $p_\alpha=\Tr(E_\alpha\rho)$. Recently, a general class of symmetric measurements has been introduced \cite{SIC-MUB}. Let us summarize its definition and main properties.

\begin{Definition}
An $(N,M)$-POVM is a collection of $N$ POVMs $\{E_{\alpha,k};\,k=1,\ldots,M\}$, $\alpha=1,\ldots,N$, that satisfy the symmetry conditions
\begin{equation}\label{M}
\begin{split}
\Tr (E_{\alpha,k})&=\frac dM,\\
\Tr (E_{\alpha,k}^2)&=x,\\
\Tr (E_{\alpha,k}E_{\alpha,\ell})&=\frac{d-Mx}{M(M-1)},\qquad \ell\neq k,\\
\Tr (E_{\alpha,k}E_{\beta,\ell})&=\frac{d}{M^2},\qquad \beta\neq\alpha,
\end{split}
\end{equation}
with a free parameter $x$ that belongs to the range
\begin{equation}\label{x}
\frac{d}{M^2}<x\leq\min\left\{\frac{d^2}{M^2},\frac{d}{M}\right\}.
\end{equation}
\end{Definition}

If $x=d^2/M^2$, then the $(N,M)$-POVM describes projective measurements. Moreover, eq. (\ref{x}) implies that there are no projective $(N,M)$-POVMs for $M<d$. Symmmetric measurements are informationally complete if and only if
\begin{equation}\label{mn}
N=\frac{d^2-1}{M-1}.
\end{equation}
Observe that if $d=2$, there are only two possible choices of admissible $M$ and $N$: $(N=1,M=4)$ for the general symmetric, informationally complete (SIC) POVMs \cite{Gour} and $(N=3,M=2)$ for mutually unbiased measurements (MUMs) \cite{Kalev}. In this case, Definition \ref{M} only provides a unified method to describe generalizations of SIC POVMs and mutually unbiased bases (MUBs). However, this changes for $d>2$.

\begin{Proposition}\label{P1}
In any dimension $d<\infty$, there exist at least four distinct classes of informationally complete $(N,M)$-POVMs depending on the choice of $M$ and $N$:
\begin{enumerate}[label=(\roman*)]
\item $M=d^2$ and $N=1$ (general SIC POVM),
\item $M=d$ and $N=d+1$ (MUMs),
\item $M=2$ and $N=d^2-1$,
\item $M=d+2$ and $N=d-1$.
\end{enumerate}
\end{Proposition}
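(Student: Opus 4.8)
The plan is to separate the claim into two independent tasks: first, a purely arithmetic check that each of the four pairs $(N,M)$ satisfies the informational-completeness condition \eqref{mn}; and second, an existence argument guaranteeing that the corresponding $(N,M)$-POVM can actually be realized in every finite dimension. The arithmetic is the conceptual heart of the statement, so I would start there. Rewriting \eqref{mn} as $N(M-1)=d^2-1$ and factoring $d^2-1=(d-1)(d+1)$, the question of which values of $M$ yield an integer $N$ becomes the question of which integers $M-1$ divide $(d-1)(d+1)$.

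First I would observe that, regardless of the value of $d$, the product $(d-1)(d+1)$ always admits the four divisors $1$, $d-1$, $d+1$, and $(d-1)(d+1)=d^2-1$. Setting $M-1$ equal to each of these in turn gives $M\in\{2,\,d,\,d+2,\,d^2\}$ and, via $N=(d^2-1)/(M-1)$, the complementary values $N\in\{d^2-1,\,d+1,\,d-1,\,1\}$. Matching these against the list in the proposition reproduces exactly cases (iii), (ii), (iv), and (i). The essential point is that these four divisors are polynomial expressions in $d$ and hence valid simultaneously for all $d$; any further divisor of $d^2-1$ depends on the specific arithmetic of the chosen $d$, which is precisely why the proposition claims only ``at least four.''

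Next I would address existence. Here I would invoke the general construction of $(N,M)$-POVMs recalled from \cite{SIC-MUB}: starting from an orthonormal Hermitian operator basis $\{\oper/\sqrt{d},\,G_1,\dots,G_{d^2-1}\}$, one partitions the $d^2-1$ traceless generators into $N$ groups of $M-1$ and builds the $M$ measurement operators in each group so that the symmetry relations \eqref{M} hold. This partition is possible precisely when $N(M-1)=d^2-1$, i.e. under the same condition just verified, so each of the four pairs admits a valid construction. To close the argument I would confirm that the admissible range \eqref{x} for the parameter $x$ is non-empty in every case: since $d/M^2<d^2/M^2$ for $d>1$ and $d/M^2<d/M$ for $M>1$, and all four pairs have $M\ge 2$, the interval in \eqref{x} is always non-degenerate.

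The step I expect to require the most care is the existence/construction part rather than the bookkeeping, because one must check that the explicit operators produced by the recipe genuinely satisfy all four trace relations in \eqref{M} with a common value of $x$ in the stated range; this, however, is supplied by the cited construction and needs only to be specialized to the four divisor choices. Finally I would remark that the four pairs are mutually distinct exactly when $d>2$: for $d=2$ one has $d-1=1$ and $d+1=d^2-1=3$, so (i) collapses onto (iv) and (ii) onto (iii), leaving only the two classes noted above.
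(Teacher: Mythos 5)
Your proposal is correct and follows essentially the same route the paper relies on: the paper states Proposition~\ref{P1} without a separate proof, its content being exactly the observation that eq.~(\ref{mn}), i.e. $N(M-1)=d^2-1$, admits the four dimension-independent divisors $M-1\in\{1,\,d-1,\,d+1,\,d^2-1\}$, with existence supplied by the construction (\ref{E})--(\ref{xt}) recalled immediately afterwards from \cite{SIC-MUB}. Your closing remark that the four classes collapse to two when $d=2$ is also consistent with the paper's own discussion preceding the proposition.
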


Informationally complete $(N,M)$-POVMs can be constructed in arbitrary dimensions using orthonormal Hermitian operator bases $\{G_0=\mathbb{I}_d/\sqrt{d},G_{\alpha,k};\,\alpha=1,\ldots,N,\,k=1,\ldots,M-1\}$ with $\Tr G_{\alpha,k}=0$. Namely,
\begin{equation}\label{E}
E_{\alpha,k}=\frac 1M \mathbb{I}_d+tH_{\alpha,k},
\end{equation}
where
\begin{equation}\label{H}
H_{\alpha,k}=\left\{\begin{aligned}
&G_\alpha-\sqrt{M}(\sqrt{M}+1)G_{\alpha,k},\quad k=1,\ldots,M-1,\\
&(\sqrt{M}+1)G_\alpha,\qquad k=M,
\end{aligned}\right.
\end{equation}
and $G_\alpha=\sum_{k=1}^{M-1}G_{\alpha,k}$. The parameter $t$ is related to $x$ via
\begin{equation}\label{xt}
x=\frac{d}{M^2}+t^2(M-1)(\sqrt{M}+1)^2.
\end{equation}
The optimal value $x_{\rm opt}$, which is the greatest $x$ such that $E_{\alpha,k}\geq 0$, depends on the operator basis.

Symmetric measurements admit an important property that becomes useful later on.

\begin{Proposition}
For $L\leq N$ and an arbitrary state $\rho$, the elements of an $(N,M)$-POVM satisfy
\begin{equation}\label{C}
\sum_{\alpha=1}^L\sum_{k=1}^Mp_{\alpha,k}^2\leq\frac LM +\frac{(M^2x-d)[d\Tr(\rho^2)-1]}{dM(M-1)},
\end{equation}
where $p_{\alpha,k}=\Tr(E_{\alpha,k}\rho)$.
\end{Proposition}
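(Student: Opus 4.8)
The plan is to work with the traceless parts of the measurement operators and of the state, and to extract everything from the Hilbert--Schmidt Gram matrix dictated by the symmetry conditions \eqref{M}, rather than from the explicit construction \eqref{E}--\eqref{H}; this keeps the argument valid for an arbitrary $(N,M)$-POVM and makes the appearance of an inequality (rather than an equality) transparent.

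First I would write $E_{\alpha,k}=\frac1M\mathbb{I}_d+\widetilde E_{\alpha,k}$ and $\rho=\frac1d\mathbb{I}_d+\widetilde\rho$, where $\widetilde E_{\alpha,k}$ and $\widetilde\rho$ are Hermitian and traceless. Then $p_{\alpha,k}=\frac1M+\Tr(\widetilde E_{\alpha,k}\widetilde\rho)$, and since $\sum_{k=1}^M \widetilde E_{\alpha,k}=\sum_k E_{\alpha,k}-\mathbb{I}_d=0$ the linear terms drop out upon squaring and summing, leaving
\begin{equation}
\sum_{\alpha=1}^L\sum_{k=1}^M p_{\alpha,k}^2=\frac LM+\sum_{\alpha=1}^L\sum_{k=1}^M \Tr(\widetilde E_{\alpha,k}\widetilde\rho)^2 .
\end{equation}
The whole problem thus reduces to bounding $S:=\sum_{\alpha\le L}\sum_k\Tr(\widetilde E_{\alpha,k}\widetilde\rho)^2$.

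Next I would read off, directly from \eqref{M}, the Gram matrix of the traceless operators. Setting $y:=x-\frac d{M^2}>0$ (positivity by \eqref{x}), one gets $\Tr(\widetilde E_{\alpha,k}^2)=y$, $\Tr(\widetilde E_{\alpha,k}\widetilde E_{\alpha,\ell})=-\frac{y}{M-1}$ for $\ell\neq k$, and $\Tr(\widetilde E_{\alpha,k}\widetilde E_{\beta,\ell})=0$ for $\beta\neq\alpha$. Hence, for fixed $\alpha$, the Gram matrix of $\widetilde E_{\alpha,1},\dots,\widetilde E_{\alpha,M}$ equals $\frac{My}{M-1}\bigl(\mathbb{I}_M-\frac1M J_M\bigr)$, with $J_M$ the all-ones matrix; its spectrum is $0$ (along $\sum_k\widetilde E_{\alpha,k}=0$) together with $\frac{My}{M-1}$ of multiplicity $M-1$. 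Thus the $\widetilde E_{\alpha,k}$ form a tight frame for the $(M-1)$-dimensional subspace $V_\alpha$ they span, i.e. the map $\psi\mapsto\sum_{k}\Tr(\widetilde E_{\alpha,k}\psi)\,\widetilde E_{\alpha,k}$ equals $\frac{My}{M-1}P_\alpha$, with $P_\alpha$ the orthogonal projection onto $V_\alpha$; the vanishing cross-$\alpha$ overlaps say precisely that the $V_\alpha$ are mutually orthogonal.

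The conclusion is then immediate. By tightness, $\sum_k\Tr(\widetilde E_{\alpha,k}\widetilde\rho)^2=\frac{My}{M-1}\|P_\alpha\widetilde\rho\|^2$, and by orthogonality of the $V_\alpha$ the sum $P=\sum_{\alpha\le L}P_\alpha$ is again an orthogonal projection, so
\begin{equation}
S=\frac{My}{M-1}\sum_{\alpha=1}^L\|P_\alpha\widetilde\rho\|^2=\frac{My}{M-1}\,\|P\widetilde\rho\|^2\le\frac{My}{M-1}\,\Tr(\widetilde\rho^2);
\end{equation}
finally $\Tr(\widetilde\rho^2)=\Tr(\rho^2)-\frac1d=\frac{d\Tr(\rho^2)-1}{d}$ and $\frac{My}{M-1}=\frac{M^2x-d}{M(M-1)}$, which reproduces \eqref{C}. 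I expect the only real content to be the middle step---checking that each per-$\alpha$ Gram matrix carries just the two eigenvalues $0$ and $\frac{My}{M-1}$, so the frame superoperator is genuinely a multiple of a projector; everything after that is the contraction property $\|P\widetilde\rho\|\le\|\widetilde\rho\|$, with equality exactly when $\widetilde\rho\in\bigoplus_{\alpha\le L}V_\alpha$ (automatic for an informationally complete family with $L=N$). Alternatively, one can substitute \eqref{E}--\eqref{H} and expand $\Tr(H_{\alpha,k}\rho)$ in the coefficients $\Tr(G_{\alpha,k}\rho)$: the cross terms cancel and Parseval's inequality for the orthonormal system $\{G_0,G_{\alpha,k}\}$ plays the role of the contraction above.
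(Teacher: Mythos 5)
Your proof is correct, but it takes a genuinely different route from the paper's. The paper follows Rastegin's method: it invokes the explicit construction \eqref{E}--\eqref{H}, expands $\rho=\frac1d\mathbb{I}_d+\sum_{\alpha,k}r_{\alpha,k}H_{\alpha,k}$ in the $H_{\alpha,k}$, computes $\sum_k p_{\alpha,k}^2$ in the coefficients $r_{\alpha,k}$, and gets the bound by discarding the nonnegative contributions $M\sum_k r_{\alpha,k}^2-r_\alpha^2\geq 0$ of the blocks $\alpha>L$ from $\Tr(\rho^2)-\frac1d$. You instead work intrinsically with the traceless parts $\widetilde E_{\alpha,k}=E_{\alpha,k}-\frac1M\mathbb{I}_d$, whose Gram matrix $\frac{My}{M-1}\bigl(\mathbb{I}_M-\frac1M J_M\bigr)$ is forced by the symmetry conditions \eqref{M} alone; each block is then a tight frame for an $(M-1)$-dimensional subspace $V_\alpha$, the $V_\alpha$ are mutually orthogonal, and the inequality is the contraction $\|P\widetilde\rho\|\leq\|\widetilde\rho\|$ for the projection $P=\sum_{\alpha\leq L}P_\alpha$. (Your Gram-matrix computations check out: $y=x-\frac{d}{M^2}$, off-diagonal $-\frac{y}{M-1}$, cross-$\alpha$ overlaps zero, and $\frac{My}{M-1}=\frac{M^2x-d}{M(M-1)}$.) Your version buys two things: it is basis-free and applies verbatim to an arbitrary $(N,M)$-POVM --- note that the paper's step ``$H_{\alpha,k}$ span the space of traceless Hermitian operators'' is literally valid only for informationally complete sets, so for general $N$ the paper's expansion of $\rho$ silently requires completing the family, a gap your projection argument avoids --- and it exhibits the equality case ($\widetilde\rho\in\bigoplus_{\alpha\leq L}V_\alpha$), consistent with the paper's remark that \eqref{C} is an equality at $L=N$ for complete sets. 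What the paper's coordinate computation buys is the explicit coefficient machinery that is reused later (the $t$--$x$ correspondence \eqref{xt} and the analogous manipulations in the proof of Proposition 3). Since $\widetilde E_{\alpha,k}=tH_{\alpha,k}$ under the construction, the two arguments are algebraically equivalent where they overlap; yours is the more general formulation.
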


\begin{proof}
We follow the method presented in ref. \cite{Rastegin2}. First, observe that $H_{\alpha,k}$ are traceless operators that satisfy the relations
\begin{equation}
\begin{split}
\Tr(H_{\alpha,k}^2)&=(M-1)(\sqrt{M}+1)^2,\\
\Tr(H_{\alpha,k}H_{\alpha,\ell})&=-(\sqrt{M}+1)^2,\qquad \ell\neq k,\\
\Tr(H_{\alpha,k}H_{\beta,\ell})&=0,\qquad \beta\neq\alpha.
\end{split}
\end{equation}
Now, similarly to $G_{\alpha,k}$, $H_{\alpha,k}$ span the space of traceless Hermitian operators. Therefore, any quantum state $\rho$ can be represented as
\begin{equation}
\rho=\frac 1d \mathbb{I}_d+\sum_{\alpha=1}^N\sum_{k=1}^Mr_{\alpha,k}H_{\alpha,k}
\end{equation}
with real-valued parameters $r_{\alpha,k}$. Next, we calculate
\begin{equation}
p_{\alpha,k}=\Tr(\rho E_{\alpha,k})=\frac 1M + t\Tr(\rho H_{\alpha,k})=
\frac 1M + t\sum_{\ell=1}^Mr_{\alpha,k}\Tr(H_{\alpha,k}H_{\alpha,\ell})
=\frac 1M +t(\sqrt{M}+1)^2(Mr_{\alpha,k}-r_\alpha),
\end{equation}
where $r_\alpha=\sum_{k=1}^Mr_{\alpha,k}$. The sum of squares reads
\begin{equation}
\sum_{k=1}^Mp_{\alpha,k}^2=\frac 1M+t^2M(\sqrt{M}+1)^4\left(M\sum_{k=1}^M r_{\alpha,k}^2-r_\alpha^2\right).
\end{equation}
Finally, we take the sum over $\alpha=1,\ldots,L$ and get
\begin{equation}\label{almost}
\sum_{\alpha=1}^L\sum_{k=1}^Mp_{\alpha,k}^2=\frac LM+t^2M(\sqrt{M}+1)^4\sum_{\alpha=1}^L\left(M\sum_{k=1}^M r_{\alpha,k}^2-r_\alpha^2\right).
\end{equation}
It remains to notice that
\begin{equation}
\Tr(\rho^2)-\frac 1d
=\sum_{\alpha=1}^N\sum_{k,\ell=1}^Mr_{\alpha,k}r_{\alpha,\ell}\Tr(H_{\alpha,k}H_{\alpha,\ell})
\geq
\sum_{\alpha=1}^L\sum_{k,\ell=1}^Mr_{\alpha,k}r_{\alpha,\ell}\Tr(H_{\alpha,k}H_{\alpha,\ell})
=(\sqrt{M}+1)^2\left[M\sum_{k=1}^Mr_{\alpha,k}^2-r_\alpha^2\right].
\end{equation}
Applying this result in eq. (\ref{almost}), we arrive at
\begin{equation}
\sum_{\alpha=1}^L\sum_{k=1}^Mp_{\alpha,k}^2\leq\frac LM +t^2M(\sqrt{M}+1)^2\left[\Tr(\rho^2)-\frac 1d \right]
=\frac LM +\frac{(M^2x-d)[d\Tr(\rho^2)-1]}{dM(M-1)}
\end{equation}
after using the correspondence between $t$ and $x$ in eq. (\ref{xt}).
\end{proof}

For $L=N$, the relation in eq. (\ref{C}) is an equality and it reproduces the result first obtained in ref. \cite{SIC-MUB},
\begin{equation}
\sum_{\alpha=1}^N\sum_{k=1}^Mp_{\alpha,k}^2=\frac{d(M^2x-d)\Tr(\rho^2)+d^3-M^2x}{dM(M-1)}.
\end{equation}
On the other hand, for pure states $\rho$, eq. (\ref{C}) reduces to
\begin{equation}\label{ineq}
\sum_{\alpha=1}^L\sum_{k=1}^Mp_{\alpha,k}^2\leq
\frac LM +\frac{(d-1)(M^2x-d)}{dM(M-1)}.
\end{equation}

\section{Positive maps and entanglement witnesses}

Using any $(N,M)$-POVM (that is not necessarily informationally complete), define $N$ trace-preserving maps
\begin{equation}
\Phi_\alpha[X]=\frac Md \sum_{k,\ell=1}^{M}\mathcal{O}^{(\alpha)}_{k\ell}E_{\alpha,k}\Tr(XE_{\alpha,\ell}),
\end{equation}
where $\mathcal{O}^{(\alpha)}$ are orthogonal rotations that preserve the maximally mixed vector $\mathbf{n}_\ast=(1,\ldots,1)/\sqrt{d}$.

\begin{Proposition}\label{Prop}
The linear map
\begin{equation}\label{PTP}
\Phi=\frac{1}{b}\left[a\Phi_0+\sum_{\alpha=L+1}^N\Phi_\alpha-\sum_{\alpha=1}^L\Phi_\alpha\right]
\end{equation}
with $a=b-N+2L$ and $b=(d-1)M(x-y)/d$ is positive and trace-preserving. $\Phi_0$ is the maximally depolarizing channel.
\end{Proposition}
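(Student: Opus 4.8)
The plan is to verify the two asserted properties in turn, trace preservation being the quick one. First I would show that each building block $\Phi_\alpha$ is already trace preserving: taking the trace of $\Phi_\alpha[X]$ and using $\Tr(E_{\alpha,k})=d/M$ from \eqref{M} cancels the prefactor $M/d$ and leaves $\sum_{k,\ell}\mathcal{O}^{(\alpha)}_{k\ell}\Tr(XE_{\alpha,\ell})$. Since $\mathcal{O}^{(\alpha)}$ fixes the uniform direction $\mathbf{n}_\ast$ it is column-stochastic, so $\sum_k\mathcal{O}^{(\alpha)}_{k\ell}=1$ and the expression collapses to $\sum_\ell\Tr(XE_{\alpha,\ell})=\Tr\!\big(X\sum_\ell E_{\alpha,\ell}\big)=\Tr X$ because each POVM resolves the identity; the depolarizing $\Phi_0$ is trace preserving as well. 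Trace preservation of $\Phi$ then follows by coefficient bookkeeping: the three groups contribute $a+(N-L)-L=a+N-2L$, which equals $b$ precisely by the choice $a=b-N+2L$, cancelling the $1/b$ in \eqref{PTP}.

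For positivity I would use that it suffices to check $\langle\psi|\Phi[|\phi\rangle\langle\phi|]|\psi\rangle\geq 0$ for all unit vectors $|\psi\rangle,|\phi\rangle$, since every $X\geq 0$ is a convex combination of rank-one projectors. Writing $p_{\alpha,k}=\langle\phi|E_{\alpha,k}|\phi\rangle$ and $q_{\alpha,k}=\langle\psi|E_{\alpha,k}|\psi\rangle$ for the outcome distributions of the two pure states, each term becomes $\langle\psi|\Phi_\alpha[|\phi\rangle\langle\phi|]|\psi\rangle=(M/d)\,\mathbf{q}_\alpha^{\mathsf T}\mathcal{O}^{(\alpha)}\mathbf{p}_\alpha$ while $\Phi_0$ contributes $1/d$. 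The structural fact that drives everything is that $\sum_k E_{\alpha,k}=\mathbb{I}_d$ forces $\sum_k p_{\alpha,k}=\sum_k q_{\alpha,k}=1$, so both $\mathbf{p}_\alpha$ and $\mathbf{q}_\alpha$ have a fixed component along the uniform direction. Splitting $\mathcal{O}^{(\alpha)}$ into its (trivial) action on that direction and its action on the orthogonal complement isolates a constant piece $1/M$ per $\alpha$, giving $\mathbf{q}_\alpha^{\mathsf T}\mathcal{O}^{(\alpha)}\mathbf{p}_\alpha=1/M+\tilde{\mathbf{q}}_\alpha^{\mathsf T}\mathcal{O}^{(\alpha)}\tilde{\mathbf{p}}_\alpha$ with $\tilde{\mathbf{p}}_\alpha,\tilde{\mathbf{q}}_\alpha$ the transverse parts. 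Summing over the three groups, the $1/M$ and $1/d$ constants combine once more through $a=b-N+2L$ to leave exactly $\langle\psi|\Phi[|\phi\rangle\langle\phi|]|\psi\rangle=\frac{1}{d}+\frac{M}{bd}\big(\sum_{\alpha>L}-\sum_{\alpha\leq L}\big)\tilde{\mathbf{q}}_\alpha^{\mathsf T}\mathcal{O}^{(\alpha)}\tilde{\mathbf{p}}_\alpha$.

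It then remains to bound the transverse remainder from below. Since $\mathcal{O}^{(\alpha)}$ is orthogonal it preserves norms, so Cauchy-Schwarz gives $|\tilde{\mathbf{q}}_\alpha^{\mathsf T}\mathcal{O}^{(\alpha)}\tilde{\mathbf{p}}_\alpha|\leq\|\tilde{\mathbf{q}}_\alpha\|\,\|\tilde{\mathbf{p}}_\alpha\|$; the signs in the two groups only ever help the lower bound, so a second Cauchy-Schwarz over $\alpha$ yields $\big|\sum_\alpha\tilde{\mathbf{q}}_\alpha^{\mathsf T}\mathcal{O}^{(\alpha)}\tilde{\mathbf{p}}_\alpha\big|\leq\big(\sum_\alpha\|\tilde{\mathbf{q}}_\alpha\|^2\big)^{1/2}\big(\sum_\alpha\|\tilde{\mathbf{p}}_\alpha\|^2\big)^{1/2}$. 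Here $\sum_\alpha\|\tilde{\mathbf{p}}_\alpha\|^2=\sum_{\alpha,k}p_{\alpha,k}^2-N/M$ is exactly the object controlled by the pure-state sum-of-squares bound \eqref{ineq} of Proposition 2 (taken with $L=N$), and likewise for $\mathbf{q}$. Feeding these in, positivity reduces to the scalar inequality $b\geq M\,(d-1)(M^2x-d)/[dM(M-1)]$, which upon substituting $b=(d-1)M(x-y)/d$ is equivalent to $y\leq (d-Mx)/[M(M-1)]$; by \eqref{x} this range also guarantees $b>0$, closing the argument.

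I expect the positivity step to be the main obstacle. The two nested Cauchy-Schwarz estimates must be arranged so that the extremal alignment of the transverse vectors is captured precisely by the bound of Proposition 2, and one must keep careful track of the sign of $b$ and of the admissible range of $y$ so that the resulting scalar inequality matches the stated coefficients $a$ and $b$ exactly; by contrast the trace-preservation check and the constant bookkeeping are routine.
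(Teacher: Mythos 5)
Your proof is correct, but it reaches positivity by a genuinely different mechanism than the paper. The paper does not check matrix elements at all: it invokes the sufficient second-moment criterion of the MUB-witness construction, namely that a trace-preserving map is positive once $\Tr(\Phi[P])^2\leq 1/(d-1)$ for every rank-one projector $P$ (this rests on the fact that a unit-trace Hermitian operator $Y$ with $\Tr Y^2\leq 1/(d-1)$ is automatically positive semidefinite). It then expands $\Tr(\widetilde{\Phi}[P])^2$ using the POVM overlap relations — all cross terms $\Tr(\Phi_\alpha[P]\Phi_\beta[P])$, $\alpha\neq\beta$, as well as the $\Phi_0$ terms equal $1/d$, while the diagonal terms produce $\sum_k[\Tr(PE_{\alpha,k})]^2$ — and applies Proposition 2 \emph{once}, to $P$, obtaining $\Tr(\widetilde{\Phi}[P])^2\leq b^2/(d-1)$. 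You instead verify $\langle\psi|\Phi[|\phi\rangle\langle\phi|]|\psi\rangle\geq 0$ directly, which is an exact characterization of positivity rather than a sufficient condition; your splitting along the uniform vector plus the two Cauchy--Schwarz steps applies Proposition 2 \emph{twice}, to the outcome distributions of both $|\phi\rangle$ and $|\psi\rangle$. Both routes are tight: with $y$ read as the intra-POVM overlap $y=(d-Mx)/[M(M-1)]$ from eq.~(\ref{M}) (the proposition never defines $y$, but this is the only consistent reading, giving $b=(d-1)(M^2x-d)/[d(M-1)]$), your final scalar inequality holds with \emph{equality}, so your worst-case lower bound is exactly $0$, just as the paper's bound saturates $b^2/(d-1)$. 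What your route buys is self-containedness (no appeal to the external criterion) and an explicit picture of the extremal alignment of the transverse probability vectors; what the paper's route buys is that the same second-moment algebra is the computational backbone reused throughout the rest of the paper. Two cosmetic points: an orthogonal $\mathcal{O}^{(\alpha)}$ fixing $\mathbf{n}_\ast$ is not ``column-stochastic'' (its entries may be negative) — all you need, and all that holds, is $\sum_k\mathcal{O}^{(\alpha)}_{k\ell}=\sum_\ell\mathcal{O}^{(\alpha)}_{k\ell}=1$; and the uniform vector here lives in $\mathbb{R}^M$, so the normalization should be $1/\sqrt{M}$ (the paper's $1/\sqrt{d}$ is itself a typo), though neither point affects your argument.
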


\begin{proof}
It suffices to prove that
\begin{equation}\label{pos}
\Tr(\Phi[P])^2\leq\frac{1}{d-1}
\end{equation}
holds for every rank-1 projectors $P$ \cite{MUBs}. For simplicity, introduce the map $\widetilde{\Phi}=b\Phi$, and then calculate
\begin{equation}\label{PTP1}
\begin{split}
\Tr(\widetilde{\Phi}[P])^2=\Tr\Bigg\{&a^2\Phi_0[P]^2+\sum_{\alpha,\beta=1}^L
\Phi_\alpha[P]\Phi_\beta[P]+\sum_{\alpha,\beta=L+1}^N\Phi_\alpha[P]\Phi_\beta[P]
+2a\sum_{\alpha=L+1}^N\Phi_0[P]\Phi_\alpha[P]\\&-2a\sum_{\alpha=1}^L\Phi_0[P]\Phi_\alpha[P]
-2\sum_{\alpha=L+1}^N\sum_{\beta=1}^L\Phi_\alpha[P]\Phi_\beta[P]\Bigg\}.
\end{split}
\end{equation}
First, let us simplify the terms in the brackets using the properties of $(N,M)$-POVMs and the orthogonal rotation matrices,
\begin{equation}
\sum_{k=1}^M\mathcal{O}^{(\alpha)}_{k\ell}=\sum_{\ell=1}^M\mathcal{O}^{(\alpha)}_{k\ell}=1,\qquad
\sum_{k=1}^M\mathcal{O}^{(\alpha)}_{k\ell}\mathcal{O}^{(\alpha)}_{km}=\delta_{\ell m}.
\end{equation}
One has
\begin{equation}
\Tr(\Phi_0[P]^2)=\Tr(\Phi_0[P]\Phi_\alpha[P])=\Tr(\Phi_\alpha[P]\Phi_\beta[P])=\frac 1d,\quad \alpha\neq\beta,
\end{equation}
as well as
\begin{equation}
\Tr(\Phi_\alpha[P]^2)=\frac{M}{d^2(M-1)}\left\{d-Mx+(M^2x-d)
\sum_{k=1}^M\left[\Tr(PE_{\alpha,k})\right]^2\right\}.
\end{equation}
Now, eq. (\ref{PTP1}) can be rewritten as
\begin{equation}\label{PTP2}
\begin{split}
\Tr(\widetilde{\Phi}[P])^2=&\,\frac 1d \Big[a^2+(N-L)(N-L-1)+L(L-1)+2a(N-L)-2aL
-2L(N-L)\Big]\\&+\frac{M}{d^2(M-1)}\left\{N(d-Mx)+(M^2x-d)
\sum_{\alpha=1}^N\sum_{k=1}^M\left[\Tr(PE_{\alpha,k})\right]^2\right\}\\
=&\frac 1d \Big[(a+N-2L)^2-N\Big]+\frac{M}{d^2(M-1)}\left\{N(d-Mx)+(M^2x-d)
\sum_{\alpha=1}^N\sum_{k=1}^M\left[\Tr(PE_{\alpha,k})\right]^2\right\}.
\end{split}
\end{equation}
Finally, eq. (\ref{C}) from Proposition 2 and the definition of $a$ allow us to write
\begin{equation}
\begin{split}
\Tr(\widetilde{\Phi}[P])^2\leq\frac{b^2-N}{d} +\frac{MN(d-Mx)}{d^2(M-1)}+\frac{M(M^2x-d)}{d^2(M-1)}\left[
\frac NM +\frac{(d-1)(M^2x-d)}{dM(M-1)}\right]=
\frac{b^2}{d-1},
\end{split}
\end{equation}
which finally proves the validity of condition (\ref{pos}).
\end{proof}

Positive maps are used in the theory of quantum entanglement to detect entangled states; that is, states that are not separable. Positive but not completely positive maps $\Phi$ define entanglement witnesses $W$ through the Choi-Jamio{\l}kowski isomorphism,
\begin{equation}\label{W}
W=\sum_{k,\ell=0}^{d-1}|k\>\<\ell|\otimes\Phi[|k\>\<\ell|],
\end{equation}
where $|k\>$ is an orthonormal basis in $\mathbb{C}^d$. In particular, the positive map $\Phi$ from Proposition 3 gives rise to
\begin{equation}\label{W2}
W=\frac{1}{b}\left(\frac ad \mathbb{I}_{d^2}
+\sum_{\alpha=L+1}^NK_\alpha-\sum_{\alpha=1}^LK_\alpha\right),
\end{equation}
where
\begin{equation}
K_\alpha=\frac Md \sum_{k,\ell=1}^M\mathcal{O}_{k\ell}^{(\alpha)}
\overline{E}_{\alpha,\ell}\otimes E_{\alpha,k}.
\end{equation}
In any dimension $d$, one can always construct informationally complete $(N,M)$-POVMs by using an orthonormal basis $\{\mathbb{I}_d/\sqrt{d},G_{\alpha,k}\}$ of traceless Hermitian operators $G_{\alpha,k}$. Then, the entanglement witness can be expressed in terms of the operator basis,
\begin{equation}\label{WW}
\widetilde{W}=\frac{b}{t^2} W = \frac{d-1}{d^2}M^2(\sqrt{M}+1)^2
\mathbb{I}_{d^2}+\sum_{\alpha=L+1}^NJ_\alpha-\sum_{\alpha=1}^LJ_\alpha,
\end{equation}
where
\begin{equation}\label{J}
J_\alpha=\frac Md \sum_{k,\ell=1}^{M}\mathcal{O}_{kl}^{(\alpha)}
\overline{H}_{\alpha,\ell}\otimes H_{\alpha,k}.
\end{equation}
Note that this witness does not depend on the parameter $x$ that characterizes symmetric measurements. However, $\widetilde{W}$ does depend on the number $M$ of operators inside a single POVM. The greater the value of $M$, the greater $L$ can be (the more can be subtracted).

Going a step further, $\widetilde{W}$ can be directly written in terms of the operator basis elements $G_{\alpha,k}$. Using eqs. (\ref{H}) and (\ref{J}), it is straightforward to find
\begin{equation}
J_\alpha=\frac Md \sum_{k,\ell=1}^{M-1}\mathcal{Q}_{kl}^{(\alpha)}\overline{G}_{\alpha,\ell}
\otimes G_{\alpha,k}
\end{equation}
with
\begin{equation}\label{Q}
\mathcal{Q}_{k\ell}^{(\alpha)} = M(\mathcal{O}_{MM}^{(\alpha)}-1) + M(\sqrt{M}+1)^2 \mathcal{O}_{k\ell}^{(\alpha)} - M(\sqrt{M}+1) (\mathcal{O}_{M\ell}^{(\alpha)}+\mathcal{O}_{kM}^{(\alpha)}).
\end{equation}
Such $\mathcal{Q}^{(\alpha)}$ are rescaled orthogonal matrices since they satisfy
\begin{equation}
\mathcal{Q}^{(\alpha)T}\mathcal{Q}^{(\alpha)}
=\mathcal{Q}^{(\alpha)}\mathcal{Q}^{(\alpha)T}=M^2(\sqrt{M}+1)^4\mathbb{I}_{M-1}.
\end{equation}
For the special case with $\mathcal{O}^{(\alpha)}=\mathbb{I}_M$, one has
\begin{equation}\label{Id}
\mathcal{Q}_{k\ell}^{(\alpha)} = M(\sqrt{M}+1)^2 \delta_{k\ell}.
\end{equation}

Now, let us show how $\widetilde{W}$ from eq. (\ref{WW}) relates to a well-known class of entanglement witnesses. For now, assume that the $(N,M)$-POVM is informationally complete and $L=N$. In this case, the associated witness reads
\begin{equation}
\widetilde{W}=\frac{M^2}{d}(\sqrt{M}+1)^2\left[\mathbb{I}_{d^2}- G_0\otimes G_0-\frac{d}{M^2(\sqrt{M}+1)^2}\sum_{\alpha=1}^{N}J_\alpha\right],
\end{equation}
where $G_0=\mathbb{I}_d/\sqrt{d}$. A simple relabelling of indices $(\alpha,k)\longmapsto\mu$ allows us to write
\begin{equation}
\widetilde{W}^\prime=\frac{d\widetilde{W}}{M^2(\sqrt{M}+1)^2}=\mathbb{I}_{d^2}
-\sum_{\mu,\nu=0}^{d^2-1}Q_{\mu\nu} G_\mu^T\otimes G_\nu,
\end{equation}
where we introduced the block-diagonal orthogonal matrix
\begin{equation}
Q=\frac{1}{M(\sqrt{M}+1)^2}
\left[\begin{array}{c c c c c}
M(\sqrt{M}+1)^2 & & & & \\
 & \mathcal{Q}^{(1)T} & & & \\
 & & \mathcal{Q}^{(2)T} & & \\
 & & & \ddots & \\
 & & & & \mathcal{Q}^{(N)T}
\end{array}\right].
\end{equation}
Now, it becomes evident that entanglement witnesses $\widetilde{W}$ constructed from symmetric measurements are a part of a larger class
\begin{equation}\label{WW2}
W^\prime=\mathbb{I}_{d^2}-\sum_{\mu,\nu=0}^{d^2-1}Q_{\mu\nu} G_\mu^T\otimes G_\nu,
\end{equation}
which is related to the CCNR criterion \cite{YuLiu}. In the above formula, $G_\mu$ are the elements of an arbitrary orthonormal Hermitian basis and $Q_{\mu\nu}$ is any orthogonal matrix ($Q^TQ=\mathbb{I}_{d^2}$). However, it is enough that $Q^TQ\leq\mathbb{I}_{d^2}$. Hence, our discussion also holds true for any $(N,M)$-POVMs and $L\leq N$, provided that the rotation matrices $\mathcal{O}^{(\alpha)}$ get replaced with $\widetilde{\mathcal{O}}^{(\alpha)}$ that are allowed to change the sign of $\mathbf{n}_\ast$ ($\widetilde{\mathcal{O}}^{(\alpha)}\mathbf{n}_\ast=\pm\mathbf{n}_\ast$) or vanish ($\widetilde{\mathcal{O}}^{(\alpha)}=\mathbb{O}_M$).

\begin{Example}
Take any informationally complete $(N,M)$-POVM. From eq. (\ref{Id}), it follows that fixing all $\mathcal{O}^{(\alpha)}=\mathbb{I}_M$ and $L=N$ always results in $Q=\mathbb{I}_{d^2}$. The corresponding entanglement witness has the form
\begin{equation}
\widetilde{W}^\prime
=\mathbb{I}_{d^2}-\sum_{\mu=0}^{d^2-1} G_\mu^T\otimes G_\mu.
\end{equation}
Therefore, it is possible to produce the same witnesses $\widetilde{W}^\prime$ using different $(N,M)$-POVMs, provided that they arise from the same Hermitian orthonormal basis. In particular, for the Gell-Mann matrices, one recovers the reduction map $\widetilde{W}^\prime=\mathbb{I}_{d^2}-dP_+$ \cite{EW_reduction,P_maps}.
\end{Example}

\begin{Example}\label{Ex2}
If $M=2$, then the only admissible choices for the rotation matrices are $\mathcal{O}^{(\alpha)}=\mathbb{I}_2$ or $\mathcal{O}^{(\alpha)}=\sigma_1$. This means that all witnesses constructed from $(N,2)$-POVMs have the form
\begin{equation}\label{WEx2}
\widetilde{W}^\prime=
\mathbb{I}_{d^2}+\sum_{\alpha=L+1}^N G_\alpha^T\otimes G_\alpha-\sum_{\alpha=0}^L G_\alpha^T\otimes G_\alpha,
\end{equation}
where $N\leq d^2-1$.
\end{Example}


\section{Indecomposabile witnesses in $d=3$}

Let us consider composite quantum systems with the underlying Hilbert space $\mathcal{H}\simeq\mathbb{C}^3\otimes\mathbb{C}^3$, which corresponds to fixing $d=3$. These are the lowest dimensional bipartite systems where the Peres-Horodecki PPT criterion is only necessary for separability. Therefore, more sophisticated methods of entanglement detection are needed. An important class of entanglement witnesses are indecomposable witnesses, which detect more entanglement than transposition $T$. Recall that a witness is decomposable if and only if it can be represented as $W=A+B^\Gamma$ with $A$, $B$ being positive operators and $\Gamma=\oper\otimes T$ denoting a partial transposition. Otherwise, it is indecomposable. Construction of indecomposable witnesses is hard but rewarding, as they can detect quantum states with positive partial transposition (PPT states). Notably, to prove indecomposability of $W$, it is enough to show a single PPT state that it detects. Below, we present several classes that can be obtained from symmetric measurements.

\begin{Example}\label{Ex1}
First, consider a full set of $N=4$ mutually unbiased measurements ($M=3$) constructed from the Gell-Mann matrices (see Appendix A). Note that for this choice of an orthonormal basis, the parameter $x=5/9<x_{\rm opt}=1$ is not optimal. Now, for $L=3$ and the permutation matrices
\begin{equation}
\mathcal{O}^{(\alpha)}=\begin{pmatrix}
0 & 1 & 0 \\
0 & 0 & 1 \\
1 & 0 & 0
\end{pmatrix},
\end{equation}
the associated entanglement witness
\begin{equation}
\widetilde{W}_1=\frac 16 \left[\begin{array}{c c c|c c c|c c c}
2 & \cdot & \cdot & \cdot & 3(1-i\sqrt{3}) & \cdot & \cdot & \cdot & 3(1-i\sqrt{3}) \\
\cdot & 2 & \cdot & \cdot & \cdot & \cdot & \cdot & \cdot & \cdot \\
\cdot & \cdot & 8 & \cdot & \cdot & \cdot & \cdot & \cdot & \cdot \\
\hline
\cdot & \cdot & \cdot & 8 & \cdot & \cdot & \cdot & \cdot & \cdot \\
3(1+i\sqrt{3}) & \cdot & \cdot & \cdot & 2 & \cdot & \cdot & \cdot & 3(1-i\sqrt{3}) \\
\cdot & \cdot & \cdot & \cdot & \cdot & 2 & \cdot & \cdot & \cdot \\
\hline
\cdot & \cdot & \cdot & \cdot & \cdot & \cdot & 2 & \cdot & \cdot \\
\cdot & \cdot & \cdot & \cdot & \cdot & \cdot & \cdot & 8 & \cdot \\
3(1+i\sqrt{3}) & \cdot & \cdot & \cdot & 3(1+i\sqrt{3}) & \cdot & \cdot & \cdot & 2
\end{array}\right]
\end{equation}
is indecomposable, as it detects a PPT state
\begin{equation}
\rho_1=\frac{1}{579}
\left[\begin{array}{c c c|c c c|c c c}
125 & \cdot & \cdot & \cdot & -5(5-12i) & \cdot & \cdot & \cdot & -5(5-12i) \\
\cdot & 125 & \cdot & \cdot & \cdot & \cdot & \cdot & \cdot & \cdot \\
\cdot & \cdot & 34 & \cdot & \cdot & \cdot & \cdot & \cdot & \cdot \\
\hline
\cdot & \cdot & \cdot & 34 & \cdot & \cdot & \cdot & \cdot & \cdot \\
-5(5+12i) & \cdot & \cdot & \cdot & 125 & \cdot & \cdot & \cdot & -5(5-12i) \\
\cdot & \cdot & \cdot & \cdot & \cdot & 125 & \cdot & \cdot & \cdot \\
\hline
\cdot & \cdot & \cdot & \cdot & \cdot & \cdot & 125 & \cdot & \cdot \\
\cdot & \cdot & \cdot & \cdot & \cdot & \cdot & \cdot & 34 & \cdot \\
-5(5+12i) & \cdot & \cdot & \cdot & -5(5+12i) & \cdot & \cdot & \cdot & 125
\end{array}\right].
\end{equation}
\end{Example}

Therefore, we show that $x$ does not have to be optimal in order to produce indecomposable witnesses, which was inconclusive in ref. \cite{P_maps}. Also, note that the state $\rho_1$ is not detected if the witness in Example \ref{Ex1} is analogously constructed from the MUBs instead.

\begin{Example}
Now, we take the $(2,N)$-POVM constructed from the orthonormal Hermitian basis presented in Appendix B, for which $x=3(5-2\sqrt{3})/4\simeq 1.15<x_{\rm opt}=3/2$ is not optimal. Assume that all the rotation matrices are $\mathcal{O}^{(\alpha)}=\mathbb{I}_2$. An indecomposable witness follows for a non-maximal number of $N=7$ POVMs and $L=3$. Indeed, one has
\begin{equation}
\widetilde{W}_2=\frac 16 \left[\begin{array}{c c c|c c c|c c c}
2+\sqrt{3} & \cdot & \cdot & \cdot & 2 & \cdot & \cdot & \cdot & 2 \\
\cdot & 5 & \cdot & \cdot & \cdot & -4 & -4 & \cdot & \cdot \\
\cdot & \cdot & 5-\sqrt{3} & -4 & \cdot & \cdot & \cdot & -4 & \cdot \\
\hline
\cdot & \cdot & -4 & 5 & \cdot & \cdot & \cdot & -4 & \cdot \\
2 & \cdot & \cdot & \cdot & 2-\sqrt{3} & \cdot & \cdot & \cdot & 2 \\
\cdot & -4 & \cdot & \cdot & \cdot & 5+\sqrt{3} & -4 & \cdot & \cdot \\
\hline
\cdot & -4 & \cdot & \cdot & \cdot & -4 & 5-\sqrt{3} & \cdot & \cdot \\
\cdot & \cdot & -4 & -4 & \cdot & \cdot & \cdot & 5+\sqrt{3} & \cdot \\
2 & \cdot & \cdot & \cdot & 2 & \cdot & \cdot & \cdot & 2
\end{array}\right],
\end{equation}
which detects a PPT state
\begin{equation}
\rho_2=\frac{1}{21}
\left[\begin{array}{c c c|c c c|c c c}
3 & \cdot & \cdot & \cdot & 1 & \cdot & \cdot & \cdot & 1 \\
\cdot & 2 & \cdot & \cdot & \cdot & 2 & 2 & \cdot & \cdot \\
\cdot & \cdot & 2 & 2 & \cdot & \cdot & \cdot & 2 & \cdot \\
\hline
\cdot & \cdot & 2 & 2 & \cdot & \cdot & \cdot & 2 & \cdot \\
1 & \cdot & \cdot & \cdot & 3 & \cdot & \cdot & \cdot & 1 \\
\cdot & 2 & \cdot & \cdot & \cdot & 2 & 2 & \cdot & \cdot \\
\hline
\cdot & 2 & \cdot & \cdot & \cdot & 2 & 2 & \cdot & \cdot \\
\cdot & \cdot & 2 & 2 & \cdot & \cdot & \cdot & 2 & \cdot \\
1 & \cdot & \cdot & \cdot & 1 & \cdot & \cdot & \cdot & 3
\end{array}\right].
\end{equation}
Hence, the full set of $N=9$ POVMs is not needed to construct indecomposable witnesses.
\end{Example}

\begin{Example}
Let us analyze an entanglement witness defined using the $(5,N)$-POVM with the associated Hermitian operator basis of the Gell-Mann matrices (see Appendix A). Also for $M=5$, the Gell-Mann basis produces symmetric measurements with the parameter $x\simeq 0.183<x_{\rm opt}=9/25$ that is not optimal. In this example, we assume that only subtractions are present ($L=N$) and that the $(M,N)$-POVM is not informationally complete ($N=1$). Now, if we take the permutation matrix
\begin{equation}
\mathcal{O}^{(1)}=\begin{pmatrix}
0 & 0 & 0 & 0 & 1 \\
1 & 0 & 0 & 0 & 0 \\
0 & 1 & 0 & 0 & 0 \\
0 & 0 & 1 & 0 & 0 \\
0 & 0 & 0 & 1 & 0
\end{pmatrix},
\end{equation}
the decomposability property of the resulting witness $\widetilde{W}_3$ is inconclusive. However, by subtracting $5(1+\sqrt{5})^2$ times more of the term for $\alpha=1$, one obtains
\begin{equation}
\widetilde{W}^\prime_3=\frac 16 \left[\begin{array}{c c c|c c c|c c c}
4 & \cdot & \cdot & \cdot & B^\ast & C & \cdot & D^\ast & B^\ast \\
\cdot & 4 & \cdot & A^\ast & \cdot & \cdot & -30i & \cdot & \cdot \\
\cdot & \cdot & 4 & 30i & \cdot & \cdot & -A^\ast & \cdot & \cdot \\
\hline
\cdot & A & -30i & 4 & \cdot & \cdot & \cdot & \cdot & \cdot \\
B & \cdot & \cdot & \cdot & 4 & \cdot & \cdot & \cdot & \cdot \\
C & \cdot & \cdot & \cdot & \cdot & 4 & \cdot & \cdot & \cdot \\
\hline
\cdot & 30i & -A^\ast & \cdot & \cdot & \cdot & 4 & \cdot & \cdot \\
D & \cdot & \cdot & \cdot & \cdot & \cdot & \cdot & 4 & \cdot \\
B & \cdot & \cdot & \cdot & \cdot & \cdot & \cdot & \cdot & 4
\end{array}\right],
\end{equation}
where
\begin{align}
A&=15(1-i)(2-i+\sqrt{5}),\\
B&=15(1-i)(2+i+\sqrt{5}),\\
C&=-30\sqrt{5}(2+\sqrt{5}),\\
D&=30(1-2i)(2+\sqrt{5}).
\end{align}
This entanglement witness is indecomposable, and it detects a PPT state
\begin{equation}
\rho_3=\frac{1}{90}
\left[\begin{array}{c c c|c c c|c c c}
10 & \cdot & \cdot & \cdot & \cdot & \cdot & \cdot & \cdot & \cdot \\
\cdot & 10 & \cdot & 3-6i & \cdot & \cdot & \cdot & \cdot & \cdot \\
\cdot & \cdot & 10 & \cdot & \cdot & \cdot & -3-6i & \cdot & \cdot \\
\hline
\cdot & 3+6i & \cdot & 10 & \cdot & \cdot & \cdot & \cdot & \cdot \\
\cdot & \cdot & \cdot & \cdot & 10 & \cdot & \cdot & \cdot & \cdot \\
\cdot & \cdot & \cdot & \cdot & \cdot & 10 & \cdot & \cdot & \cdot \\
\hline
\cdot & \cdot & -3+6i & \cdot & \cdot & \cdot & 10 & \cdot & \cdot \\
\cdot & \cdot & \cdot & \cdot & \cdot & \cdot & \cdot & 10 & \cdot \\
\cdot & \cdot & \cdot & \cdot & \cdot & \cdot & \cdot & \cdot & 10
\end{array}\right]
\end{equation}
that was undetectable by the previous (not optimal) witness $\widetilde{W}_3$.
\end{Example}

Recall that the positivity conditions for $\Phi$ in Proposition \ref{Prop} are only sufficient. Therefore, it is not surprising that we found an entanglement witness $\widetilde{W}^\prime$ which does not fall into the same class as $\widetilde{W}$ defined by such $\Phi$.

\section{Conclusions}

In this paper, we used a wide class of symmetric measurements to construct a family of positive, trace-preserving maps as well as the corresponding entanglement witnesses. Interestingly, the positivity property that is very important for the measurement operators does not determine whether the resulting map gives rise to a proper witness (has at least one negative eigenvalue). Next, it is shown that our construction belongs to the family of witnesses that are based on the CCNR separability criterion \cite{YuLiu}. The entalglement witnesses from symmetric measurements are recovered for a block-diagonal orthogonal matrix and the Hermitian orthonormal basis consisting in traceless operators and the identity. Several examples are provided using different sets of $(N,M)$-POVMs constructed from different operator bases. There remains an open question of generalizing our results to bipartite systems of different dimensions or even to a multipartite scenario.

\section{Acknowledgements}

This paper was supported by the Foundation for Polish Science (FNP) and the Polish National Science Centre project No. 2018/30/A/ST2/00837.

\bibliography{C:/Users/cynda/OneDrive/Fizyka/bibliography}
\bibliographystyle{C:/Users/cynda/OneDrive/Fizyka/beztytulow2}

\appendix

\section{Gell-Mann matrices}

In $d=3$, a popular choice of a Hermitian orthonormal basis is the Gell-Mann matrices:
\begin{align*}
g_{01}=\frac{1}{\sqrt{2}}
\begin{pmatrix}
0 & 1 & 0 \\
1 & 0 & 0 \\
0 & 0 & 0
\end{pmatrix},\qquad
&g_{10}=\frac{1}{\sqrt{2}}
\begin{pmatrix}
0 & -i & 0 \\
i & 0 & 0 \\
0 & 0 & 0
\end{pmatrix},\\
g_{02}=\frac{1}{\sqrt{2}}
\begin{pmatrix}
0 & 0 & 1 \\
0 & 0 & 0 \\
1 & 0 & 0
\end{pmatrix},\qquad
&g_{20}=\frac{1}{\sqrt{2}}
\begin{pmatrix}
0 & 0 & -i \\
0 & 0 & 0 \\
i & 0 & 0
\end{pmatrix},\\
g_{12}=\frac{1}{\sqrt{2}}
\begin{pmatrix}
0 & 0 & 0 \\
0 & 0 & 1 \\
0 & 1 & 0
\end{pmatrix},\qquad
&g_{21}=\frac{1}{\sqrt{2}}
\begin{pmatrix}
0 & 0 & 0 \\
0 & 0 & -i \\
0 & i & 0
\end{pmatrix},\\
g_{11}=\frac{1}{\sqrt{2}}
\begin{pmatrix}
1 & 0 & 0 \\
0 & -1 & 0 \\
0 & 0 & 0
\end{pmatrix},\qquad
&g_{22}=\frac{1}{\sqrt{6}}
\begin{pmatrix}
1 & 0 & 0 \\
0 & 1 & 0 \\
0 & 0 & -2
\end{pmatrix},
\end{align*}
and $G_{0}=\mathbb{I}/\sqrt{3}$. For the entanglement witness in Example 3, we fix the indices of $G_{\alpha,k}$ as follows,
\begin{equation}
\begin{split}
&G_{1,1}=g_{01},\quad G_{1,2}=g_{10},\qquad G_{2,1}=g_{02},\quad G_{2,2}=g_{20},\\
&G_{3,1}=g_{12},\quad G_{3,2}=g_{21},\qquad G_{4,1}=g_{11},\quad G_{4,2}=g_{22}.
\end{split}
\end{equation}
In Example 5, on the other hand, we take
\begin{equation}
\begin{split}
&G_{1,1}=g_{01},\quad G_{1,2}=g_{02},\quad G_{1,3}=g_{10},\quad G_{1,4}=g_{20},\\
&G_{2,1}=g_{12},\quad G_{2,2}=g_{21},\quad G_{2,3}=g_{11},\quad G_{2,4}=g_{22}.
\end{split}
\end{equation}

\section{Hermitian orthonormal basis from MUBs}

Using the complete set of four mutually unbiased bases in $d=3$ and the corresponding projectors
\begin{equation}
\begin{split}
E_{1,1}=&
\begin{pmatrix}
1 & 0 & 0 \\
0 & 0 & 0 \\
0 & 0 & 0
\end{pmatrix},\\
E_{1,2}=&
\begin{pmatrix}
0 & 0 & 0 \\
0 & 1 & 0 \\
0 & 0 & 0
\end{pmatrix},\\
E_{1,3}=&
\begin{pmatrix}
0 & 0 & 0 \\
0 & 0 & 0 \\
0 & 0 & 1
\end{pmatrix},
\end{split}\qquad
\begin{split}
E_{2,1}=&\frac 13
\begin{pmatrix}
1 & 1 & 1 \\
1 & 1 & 1 \\
1 & 1 & 1
\end{pmatrix},\\
E_{2,2}=&\frac 13
\begin{pmatrix}
1 & \omega^2 & \omega \\
\omega & 1 & \omega^2 \\
\omega^2 & \omega & 1
\end{pmatrix},\\
E_{2,3}=&\frac 13
\begin{pmatrix}
1 & \omega & \omega^2 \\
\omega^2 & 1 & \omega \\
\omega & \omega^2 & 1
\end{pmatrix},
\end{split}\qquad
\begin{split}
E_{3,1}=&\frac 13
\begin{pmatrix}
1 & \omega^2 & \omega^2 \\
\omega & 1 & 1 \\
\omega & 1 & 1
\end{pmatrix},\\
E_{3,2}=&\frac 13
\begin{pmatrix}
1 & \omega & 1 \\
\omega^2 & 1 & \omega^2 \\
1 & \omega & 1
\end{pmatrix},\\
E_{3,3}=&\frac 13
\begin{pmatrix}
1 & 1 & \omega \\
1 & 1 & \omega \\
\omega^2 & \omega^2 & 1
\end{pmatrix},
\end{split}\qquad
\begin{split}
E_{4,1}=&\frac 13
\begin{pmatrix}
1 & \omega & \omega \\
\omega^2 & 1 & 1 \\
\omega^2 & 1 & 1
\end{pmatrix},\\
E_{4,2}=&\frac 13
\begin{pmatrix}
1 & \omega^2 & 1 \\
\omega & 1 & \omega \\
1 & \omega^2 & 1
\end{pmatrix},\\
E_{4,3}=&\frac 13
\begin{pmatrix}
1 & 1 & \omega^2 \\
1 & 1 & \omega^2 \\
\omega & \omega & 1
\end{pmatrix},
\end{split}
\end{equation}
where $\omega=\exp(2\pi i/3)$, one finds the corresponding Hermitian orthonormal basis:
\begin{align*}
G_{1,1}=\frac{1}{\sqrt{3}(1+\sqrt{3})}
\begin{pmatrix}
-2-\sqrt{3} & 0 & 0 \\
0 & 1 & 0 \\
0 & 0 & 1+\sqrt{3}
\end{pmatrix},\qquad
&G_{1,2}=\frac{1}{\sqrt{3}(1+\sqrt{3})}
\begin{pmatrix}
1 & 0 & 0 \\
0 & -2-\sqrt{3} & 0 \\
0 & 0 & 1+\sqrt{3}
\end{pmatrix},\\
G_{2,1}=\frac{1}{2\sqrt{3}(1+\sqrt{3})}
\begin{pmatrix}
0 & -v^\ast & -v \\
-v & 0 & -v^\ast \\
-v^\ast & -v & 0
\end{pmatrix},\qquad
&G_{2,2}=\frac{1}{\sqrt{3}(1+\sqrt{3})}
\begin{pmatrix}
0 & iv^\ast & -iv \\
-iv & 0 & iv^\ast \\
iv^\ast & -iv & 0
\end{pmatrix},\\
G_{3,1}=\frac{1}{2\sqrt{3}(1+\sqrt{3})}
\begin{pmatrix}
0 & u^\ast & iv^\ast \\
u & 0 & -v^\ast \\
-iv & -v & 0
\end{pmatrix},\qquad
&G_{3,2}=\frac{1}{\sqrt{3}(1+\sqrt{3})}
\begin{pmatrix}
0 & u & -v^\ast \\
u^\ast & 0 & iv^\ast \\
-v & -iv & 0
\end{pmatrix},\\
G_{4,1}=\frac{1}{2\sqrt{3}(1+\sqrt{3})}
\begin{pmatrix}
0 & u & -iv \\
u^\ast & 0 & -v \\
iv^\ast & -v^\ast & 0
\end{pmatrix},\qquad
&G_{4,2}=\frac{1}{\sqrt{3}(1+\sqrt{3})}
\begin{pmatrix}
0 & u^\ast & -v \\
u & 0 & -iv \\
-v^\ast & iv^\ast & 0
\end{pmatrix},
\end{align*}
and $G_{0}=\mathbb{I}/\sqrt{3}$, where $u=(1-i)(1+\sqrt{3})$ and $v=2+\sqrt{3}+i$. The entanglement witness in Example 4 is given by eq. (\ref{WEx2}) with $G_\mu$ grouped in the following way,
\begin{equation}
\{G_1,G_2,G_3\}=\{G_{1,2},G_{2,1},G_{2,2}\},\qquad
\{G_4,G_5,G_6,G_7,G_8\}=\{G_{1,1},G_{3,1},G_{3,2},G_{4,1},G_{4,2}\}.
\end{equation}

\end{document}